\newcommand{\pij}{\langle ij \rangle}
\newcommand{\tauij}{\tau_{ij}}
\newcommand{\taubox}{\tau_{\Box}}
\newcommand{\rhobox}{\rho_{\Box}}
\newcommand{\ztaua}{Z_{\tau}(K_1, K_2)}
\newcommand{\ztaub}{Z_{\tau}(K, K_2)}
\newcommand{\tausum}{\sum_{\{\tau_{ij}=\pm 1\}}}
\newcommand{\ssum}{\sum_{\{S_i=\pm 1\}}}
\newcommand{\sigmasum}{\sum_{\{\sigma_i=\pm 1\}}}
\newcommand{\tausumsimp}{\sum_{\{\tau_{ij}\}}}
\newcommand{\ssumsimp}{\sum_{\{S_i\}}}
\newcommand{\sigmasumsimp}{\sum_{\{\sigma_i\}}}
\newtheorem{theorem}{Theorem}
\newtheorem{lemma}{Lemma}
\begin{document}

\title{Analyticity of the energy in an Ising spin glass with correlated disorder}

\author{Hidetoshi Nishimori}
\affiliation{Institute of Innovative Research, Tokyo Institute of Technology, Yokohama 226-8503, Japan\\
Graduate School of Information Sciences, Tohoku University, Sendai 980-8579, Japan\\
RIKEN Interdisciplinary Theoretical and Mathematical Sciences Program (iTHEMS), Wako, Saitama 351-0198, Japan}

\begin{abstract}
The average energy of the Ising spin glass is known to have no singularity along a special line in the phase diagram although there exists a critical point on the line.  This result on the model with uncorrelated disorder is generalized to the case with correlated disorder. For a class of correlations in disorder that suppress frustration, we show that the average energy in a subspace of the phase diagram is expressed as the expectation value of a local gauge variable of the $Z_2$ gauge Higgs model, from which we prove that the average energy has no singularity although the subspace is likely to have a phase transition on it. Though it is difficult to obtain an explicit expression of the energy in contrast to the case of uncorrelated disorder, an exact closed-form expression of a physical quantity related to the energy is derived in three dimensions using a duality relation.  Identities and inequalities are proved for the specific heat and correlation functions. 
\end{abstract}

\maketitle

\section{Introduction}
The problem of spin glass is one of the most challenging topics in statistical physics \cite{nishimori2001,mezard1987}. Only a very limited number of exact solutions are known so far, among which the Parisi solution of the mean-field-type Sherrington-Kirkpartick model \cite{Sherrington1975} stands out as a distinguished achievement \cite{Parisi1979,talagrand2003}.  Another rare example is the exact solution for the average energy of the Ising spin glass in finite dimensions on a special line in the phase diagram \cite{Nishimori1981,Nishimori_1980,nishimori2001}. The resulting expression of the energy is written as a simple hyperbolic tangent of the inverse temperature without any singularity.  Since a multicritical point (transition point) lies on this special line in the phase diagram, it is highly non-trivial and counter-intuitive that the exact average energy is an analytic function.  Other physical quantities such as the specific heat and magnetic susceptibility are expected to have a singularity at the phase transition, but nothing is known so far exactly or rigorously on those quantities in finite-dimensional models except for the existence of a phase transition on the line as concluded from the existence of ferromagnetic order in the low-temperature side of the line \cite{Horiguchi1982} (see also \cite{Garban2021} for the continuous-variable case).

Those results concern the Edwards-Anderson model \cite{Edwards1975}, in which disorder of a given interaction bond distributes independently of disorder of other bonds.  In real spin glass materials, disorder exists mostly in sites, not in bonds \cite{Mydosh1993}. Edwards and Anderson assumed that  properties of spin glasses will be independent of the type of disorder and proposed their model with uncorrelated disorder.  One of the important differences between site and bond disorder is the existence or absence of correlation of disorder among nearby bond variables.  For example, if the position of a magnetic atom in a metal is affected by disorder, all interactions between this and other atoms are changed in  a correlated manner. It therefore makes sense to study how correlations in disorder would affect the system properties.

There have been several attempts to study the effects of correlation in disorder \cite{Hoyos2011,Bonzom2013,Cavaliere_2019,Munster2021}.  It has generally been observed that correlations, if weak, do not significantly modify the system properties qualitatively.  These studies exploit numerical methods or mean-field-type models, and it is desirable to establish analytical results in finite-dimensional systems.

We introduce a model with correlations of a specific type in disorder, which makes it possible to reduce the expression of the average energy to the expectation value of a local variable in the $Z_2$ gauge Higgs model \cite{Kogut1979}.  Based on this reduction, we prove that the average energy is an analytic function in a subspace of the phase diagram, a generalization of the special line in the case of uncorrelated disorder. It is difficult to derive a closed-form exact solution for the average energy when correlations are introduced.  Nevertheless, using self duality, we show in the case of three spatial dimensions that a simple formula holds for a physical quantity related to the energy. We also derive several non-trivial identities and inequalities for the specific heat and correlation functions, generalizing the results for uncorrelated disorder. Also discussed are more complex types of correlations in disorder.

In Sec. \ref{sec:main}, we define the model and analyze its properties. Conclusion is given in Sec. \ref{sec:conclusion}, and some technical details are described in Appendixes.

\section{Ising spin glass with correlated disorder}
\label{sec:main}
We first define the model of correlated disorder and then study the properties of the corresponding Ising spin glass, in particular the behavior of the average energy in a subspace of the phase diagram.

\subsection{Model of correlated disorder}
\label{subsec:model}
As mentioned in the Introduction, it is reasonable to study a spin-glass system with short-range correlations in disorder variables. It is likely that site disorder has a different degree of frustration than in the case of uncorrelated bond disorder, a prominent example of which is the Mattis model \cite{Mattis1976} without frustration but with site-dependent quenched disorder $\xi_i=\pm 1$,
\begin{align}
    H_{\rm Mattis}=-J\sum_{\pij}\xi_i\xi_j S_i S_j,
    \label{eq:Mattis}
\end{align}
where $S_i(\pm 1)$ is the Ising variable at site $i$, and $\pij$ runs over interacting spin pairs on a lattice with interaction strength $J$.
An example with site disorder but with frustration is the Hopfield model \cite{Hopfield1982,Amit1985a,Amit1985b,Amit1987},
\begin{align}
    H_{\rm Hopfield}=-\frac{1}{N}\sum_{i,j}J_{ij}S_iS_j,
\end{align}
where $N$ is the number of sites, and the interaction is composed of site-dependent quenched variables $\{\xi_i^{\mu}=\pm 1\}$,
\begin{align}
    J_{ij}=\sum_{\mu=1}^r \xi_i^{\mu}\xi_j^{\mu}.
\end{align}
This is a model with site-dependent disorder, in which correlation exists between disorder of different bonds $J_{ij}$ and $J_{jk}$ sharing a site index $j$.  When $r=1$, the model reduces to the Mattis model of Eq.~(\ref{eq:Mattis}) without frustration, and in the limit of $r\to\infty$, the model becomes equivalent to the Sherrington-Kirkpatrick model with uncorrelated bond disorder with strong frustration according to the central limit theorem. The Hopfield model thus interpolates the unfrustrated and fully frustrated cases by the parameter $r$.

These examples motivate us to introduce a system with correlated disorder in bond variables with a parameter to control the degree of frustration.  We therefore propose and analyze the following probability distribution of bond-assigned disorder variables $\{\tauij=\pm 1\}$ with control parameters $K_1$ and $K_2$,
\begin{align}
    P(\tau,K_1,K_2)=\frac{1}{Z_{\tau}(K_1, K_2)}\exp\Big(K_1\sum_{\pij}\tau_{ij}+K_2 \sum_{\Box}\tau_{\Box}\Big),
    \label{eq:Ptau}
\end{align}
where $Z_{\tau}(K_1, K_2)$ is the normalization factor (or the partition function of disorder variables)
\begin{align}
    Z_{\tau}(K_1, K_2)=\sum_{\{ \tau_{ij}=\pm 1\}} \exp\Big(K_1\sum_{\pij}\tau_{ij}+K_2 \sum_{\Box}\tau_{\Box}\Big),
    \label{eq:Ztau}
\end{align}
and $\tau_{\Box}$ is the product of four disorder variables around a unit plaquette denoted by $\Box$ on the $d$-dimensional hypercubic lattice, $\tau_{\Box}=\tau_{ij}\tau_{jk}\tau_{kl}\tau_{li}$
\footnote{
We can consider other possibilities. For example, on the two-dimensional triangular lattice, the product will run over three bond variables around a unit triangle, $\tauij\tau_{ji}\tau_{ki}$. We work on the hypercubic lattice in the present paper for simplicity.}. 
This is a generalization of the conventional $\pm J$ model $(K_2=0)$ to the case with correlation in disorder $(K_2>0)$. The above specific type of correlation was motivated by the following reasons.  (i) Reference \cite{Cavaliere_2019} studied the case of just the second term ($K_1=0$, $K_2>0$) and measured critical exponents  in three dimension by numerical simulations. We are interested in the interplay between the first ($K_1$) and second $(K_2)$ terms in the above more general model. (ii) The above form allows us to directly control the degree of frustration by the coefficient $K_2$.  (iii) The $K_2$ term is gauge invariant as will be discussed below, which is essential for our theory in the following sections to be applicable.
Notice that the exponent in the probability distribution Eq.~(\ref{eq:Ptau}) is of the same form as the action of the $Z_2$ gauge Higgs model \cite{Kogut1979}, and this analogy facilitates our analysis below. 

The Hamiltonian of the Ising spin glass is
\begin{align}
    H=-K\sum_{\pij} \tauij S_i S_j,
    \label{eq:Hamiltonian}
\end{align}
where $K$ is the dimensionless coupling constant (the inverse temperature $K=1/T$). We study the properties of this Hamiltonian with the quenched disorder variables $\{\tau_{ij}\}$ chosen from the probability distribution of Eq.~(\ref{eq:Ptau}). Possible generalization of the probability distribution will be discussed at the end of this section.

\subsection{Average energy in a subspace of the phase diagram}
\label{subsec:energy1}
\subsubsection{Expression of the average energy}
\label{subsubsec:Energy_expression}
We first show that the average energy can be expressed in a simple formula in terms of the $Z_2$ gauge Higgs model in a subspace of the phase diagram.

The average energy $E$ of the model defined in the preceding section is written as
\begin{align}
    E(K, K_1, K_2)=\frac{1}{\ztaua}\tausum e^{K_1\sum \tauij+K_2\sum \taubox}\langle H \rangle_K,
    \label{eq:Eoriginal}
\end{align}
where $H$ is the Hamiltonian of Eq.~(\ref{eq:Hamiltonian}) and the angular brackets denote the thermal average,
\begin{align}
    \langle H \rangle_K =-\frac{1}{Z_s(K)}\frac{\partial}{\partial K}Z_s(K)
    \label{eq:Hexpectation}
\end{align}
with the partition function of the Ising model,
\begin{align}
    Z_s(K)=\ssum e^{K\sum \tauij S_i S_j}.
    \label{eq:Zs}
\end{align}
Notice that the constant multiplicative factor $K$ is dropped for simplicity in Eq.~(\ref{eq:Hexpectation}) of the energy.

Following the standard prescription \cite{Nishimori1981,nishimori2001}, we apply the gauge transformation
\begin{align}
    S_i\to S_i\sigma_i,\, \tauij\to\tauij \sigma_i \sigma_j~(\sigma_i=\pm 1)~(\forall i,j)
\end{align}
to Eqs.~(\ref{eq:Eoriginal}) and (\ref{eq:Zs}). The thermal average $\langle H\rangle_K$ is gauge invariant, and only the term in the exponent of Eq.~(\ref{eq:Eoriginal}) with coefficient $K_1$  is affected.  We sum the resulting expression over all possible values of $\{\sigma_i\}$, and divide the expression by $2^N$, and find
\begin{align}
    &-E(K, K_1, K_2)\nonumber\\
    &=\frac{1}{2^N \ztaua}\tausum\sigmasum e^{K_1\sum\tauij\sigma_i\sigma_j+K_2\sum \taubox}\,
    \frac{\partial_K \ssumsimp e^{K\sum\tauij S_iS_j}}{\ssumsimp e^{K\sum\tauij S_iS_j}}.
    \label{eq:E_intermediate}
\end{align}
Since $\taubox$ is gauge invariant and $\{\sigma_i\}$ appears only in the term with coefficient $K_1$, the factor in the numerator $\sigmasumsimp e^{K_1\sum\tauij\sigma_i\sigma_j}$ cancels with the denominator $Z_s=\ssumsimp e^{K\sum\tauij S_iS_j}$ when $K=K_1$ and we are left with a simplified expression
\begin{align}
    -E(K,K, K_2)&=\frac{1}{2^N \ztaub}\tausumsimp e^{K_2\sum\taubox}\frac{\partial}{\partial K}\ssumsimp e^{K\sum\tauij S_i S_j}\nonumber\\
    &=\frac{1}{2^N \ztaub}\ssumsimp\tausumsimp \Big(\sum_{\pij} \tauij S_i S_j\Big) e^{K\sum\tauij S_i S_j+K_2\sum \taubox}\nonumber\\
    &=\frac{1}{\ztaub}\sum_{\pij}\Big(\tausumsimp \tauij \Big)\,e^{K\sum\tauij+K_2\sum \taubox},
    \label{eq:Egauge}
\end{align}
where we have applied the gauge transformation $\tauij \to \tauij S_i S_j$ to the second line to derive the third line.

In the case of uncorrelated distribution of disorder ($K_2=0$), the condition $K=K_1$ defines a line in the $K$-$K_1$ phase diagram, sometimes called the Nishimori line \cite{Georges1985,Doussal1988,Doussal1989,Iba1998,Gruzberg2001,Honecker2001,Nobre2001,Merz2002,Hasenbusch2008,Kitatani2009,Yamaguchi2010,Krzakala2011,Ohzeki2012,Sasagawa2020,Alberici2021,Alberici2021a,Garban2021}.  It is known that the low-temperature side of this line $(K=K_1\gg 1)$ lies in the ferromagnetic phase in two and higher dimensions \cite{Horiguchi1982,Dennis2002} and that the spin glass phase exists away from the line \cite{Nishimori1981,nishimori2001}. The line generally goes across a multicritical point where the paramagnetic, ferromagnetic and spin glass phases meet in three and higher dimensions \cite{nishimori2001,Ozeki1987}. Our present model with correlation in disorder has an additional parameter $K_2$ to control the degree of correlation, and the condition $K=K_1$ defines a subspace in the three-dimensional phase diagram drawn in terms of $K_1, K_2$, and $K$.

The expression of the average energy in Eq.~(\ref{eq:Egauge}) in the subspace $K=K_1$ can be interpreted as the expectation value of the local bond variable $\tauij$ summed over $\pij$ for the $Z_2$ gauge Higgs model \cite{Kogut1979}
\footnote{
Notice that the general expression of the energy of Eq.~(\ref{eq:Eoriginal}) is also regarded as the expectation value of a function of $\{\tauij\}$, $\langle H\rangle_K(\{\tauij\})$, with respect to the gauge Higgs probability weight $P_\tau$. However, this $\langle H\rangle_K(\{\tauij\})$ is a very complicated function of the bond variables $\{\tauij\}$ involving all types of products and summations of $\{\tauij\}$, which makes it hard to analyze its properties.}.
The condition $K=K_1$ has greatly simplified the expression of the average energy, leaving only the expectation value of a single $\tauij$ summed up over $\pij$.

\subsubsection{Analyticity of the average energy}
\label{subsubsec:analyticity}
The average energy of Eq.~(\ref{eq:Egauge}) is an analytic function of $K(=K_1)$ for sufficiently small $K_2 (> 0)$:
\begin{theorem}
\label{theorem}
The average energy $E(K, K, K_2)$ of Eq. (\ref{eq:Egauge}) is analytic in $K$ for sufficiently small $K_2$ in the thermodynamic limit.
\end{theorem}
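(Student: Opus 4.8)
The plan is to identify the average energy of Eq.~(\ref{eq:Egauge}) with the $K$-derivative of the free energy of the $Z_2$ gauge Higgs model and then to establish analyticity of that free energy by a convergent cluster expansion controlled by $K_2$. First I would note that, since $\ztaub=\tausumsimp e^{K\sum\tauij+K_2\sum\taubox}$ (Eq.~(\ref{eq:Ztau}) with $K_1=K$), one has $\partial_K\log\ztaub=\tausumsimp(\sum_{\pij}\tauij)\,e^{K\sum\tauij+K_2\sum\taubox}/\ztaub=-E(K,K,K_2)$. Hence it suffices to prove that the free-energy density $\frac{1}{N}\log\ztaub$ converges, as $N\to\infty$, to a function analytic in $K$ in a complex neighborhood of the real axis; analyticity of the energy then follows by term-by-term differentiation.

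Next I would expand the plaquette factors. Writing $e^{K_2\taubox}=\cosh K_2\,(1+\tanh K_2\,\taubox)$ and expanding the product over plaquettes organizes the partition function as a sum over subsets $\mathcal{P}$ of plaquettes. Each bond variable is then summed independently, using $\sum_{\tauij=\pm1}\tauij^{\,b}e^{K\tauij}=2\cosh K$ for $b$ even and $2\sinh K$ for $b$ odd, which gives
\[
  \ztaub=(\cosh K_2)^{N_{\Box}}\,(2\cosh K)^{N_{\mathrm b}}\sum_{\mathcal{P}}(\tanh K_2)^{|\mathcal{P}|}\,(\tanh K)^{|\partial\mathcal{P}|},
\]
where $N_{\Box}$ and $N_{\mathrm b}$ are the numbers of plaquettes and bonds, and $|\partial\mathcal{P}|$ is the number of bonds lying in an odd number of plaquettes of $\mathcal{P}$. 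This weight factorizes over the connected clusters of $\mathcal{P}$ (plaquettes being adjacent when they share a bond), so the residual sum is a polymer partition function whose polymers $\gamma$ carry activity $w(\gamma)=(\tanh K_2)^{|\gamma|}(\tanh K)^{|\partial\gamma|}$.

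I would then invoke a standard convergent polymer expansion such as the Koteck\'y--Preiss criterion. The decisive estimate is $|w(\gamma)|\le|\tanh K_2|^{\,|\gamma|}$, which holds uniformly in $K$ throughout the strip $|\mathrm{Im}\,K|<\pi/4$, because $|\tanh K|\le1$ there. Since the number of connected plaquette clusters of a given size through a fixed plaquette grows only exponentially, the convergence criterion is met once $K_2$ is small enough, \emph{uniformly} in $K$ on that strip. This produces an $N$-independent, absolutely convergent series for $\frac{1}{N}\log\ztaub$ that is analytic in $K$, and Vitali's theorem promotes analyticity to the thermodynamic limit.

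The hard part is precisely to make the convergence uniform over the whole real $K$-axis, including the low-temperature regime $K\to\infty$, and in a genuine complex neighborhood so that analyticity rather than mere smoothness follows. This is where the boundedness $|\tanh K|\le1$ on $|\mathrm{Im}\,K|<\pi/4$ is indispensable: it severs the control of the expansion from the value of $K$, leaving $\tanh K_2$ as the sole small parameter. It is also the structural reason the energy remains analytic despite a transition on the subspace---the singular behavior lives in order parameters and susceptibilities not captured by $\partial_K\log\ztaub$, whereas the energy is governed by the gauge-field sector, which sits in its convergent regime for small $K_2$.
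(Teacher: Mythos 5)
Your proposal is correct, but it takes a genuinely different route from the paper's. Both arguments rest on the same underlying mechanism --- a small-$K_2$ expansion over connected clusters of plaquettes, tamed by an exponential bound on the number of such clusters of a given size --- yet the implementations differ in a substantive way. The paper never takes a logarithm: it expands the single-bond expectation $\langle\tau_{01}\rangle$ directly around the independent-bond measure $P_\tau(K_1,0)$, writing $e^{K_2(\taubox+1)}=1+\rhobox$ with $0\le\rhobox\le e^{2K_2}-1$, and the decisive technical step is the comparison bound $Z_{Q_2}/Z_\tau^+\le 1$ (borrowed from Osterwalder--Seiler), after which only the counting lemma is needed. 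You instead integrate out the bond variables exactly, obtaining a polymer gas with activities $(\tanh K_2)^{|\gamma|}(\tanh K)^{|\partial\gamma|}$, apply a Koteck\'y--Preiss-type convergence criterion to $\frac{1}{N}\log \ztaub$, and recover the energy as the $K$-derivative of the free-energy density; your starting identity $-E(K,K,K_2)=\partial_K\log \ztaub$ is correct and is precisely Eq.~(\ref{eq:Egauge}). Your route buys an explicit complex domain $|\mathrm{Im}\,K|<\pi/4$ that is uniform along the entire real $K$-axis, obtained transparently from $|\tanh K|\le 1$; this handles the complexification more cleanly than the paper's bound $\sum_{\tau}P_{\tau}\prod\rhobox\le(e^{2K_2}-1)^k$, which invokes positivity and normalization of $P_\tau$ and therefore requires a small modification once the coupling is taken complex. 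It also delivers analyticity of the full free-energy density and hence of all $K$-derivatives at once, which is what the specific-heat bound of Sec.~\ref{subsubsec:specific_heat} ultimately relies on. The paper's route is more elementary --- no cluster-expansion theorem for the logarithm is invoked, only the counting bound of Lemma~\ref{lemma} --- and it extends immediately to expectations of arbitrary finite products of bond variables (the Remark after the proof) and to the more general gauge-invariant correlations of Eq.~(\ref{eq:Ptaugen}).
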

This result is a known property of the $Z_2$ lattice gauge theory, see e.g. Ref.~\cite{Fradkin1979} for a theoretical analysis and Refs.~\cite{Creutz1980,Jongeward1980,CREUTZ1983,GENOVESE2003,Tupitsyn2010} for numerical studies. Since the argument presented in Ref.~\cite{Fradkin1979} is fairly sketchy, we provide a more formal proof in Appendix \ref{app:proof}.

When $K_2=0$ (uncorrelated disorder), the analyticity is trivial. Equation (\ref{eq:Egauge}) immediately gives a simple formula
\begin{align}
    E(K,K,0)=-N_B \tanh K,
    \label{eq:E_independent}
\end{align}
where $N_B$ is the number of bonds. Nevertheless, the line defined by $K=K_1$ in the $K$-$K_1$ phase diagram has a transition point on it since the low-temperature side ($K\gg 1)$ is in the ferromagnetic phase in two and higher dimensions \cite{Horiguchi1982} and the high-temperature side is trivially in the paramagnetic phase. Therefore the absence of singularity in $E(K,K,0)$ is highly non-trivial, a consequence of delicate balance of the probability weight $P(\tau,K_1, 0)$ and the partition function $Z_s(K)$ in Eq. (\ref{eq:E_intermediate}). Theorem \ref{theorem} claims that this non-trivial analyticity remains valid even after the introduction of correlation in disorder.  This may look rather trivial mathematically, but we believe it to be highly non-trivial physically. 

Although it is difficult to prove that the subspace $K=K_1$ has a transition point (critical point) on it when $K_2>0$, in contrast to the case $K_2=0$ where it is proven \cite{Horiguchi1982,Garban2021}, we expect that the physical properties of the system are unlikely to change dramatically by the introduction of small but finite $K_2$ as suggested by the continuation of analyticity of the average energy as in Theorem \ref{theorem}. An indirect indication of the stability of the low-temperature ($K=K_1\gg 1$) ferromagnetic phase after the introduction of $K_2$ is provided by small-$K_2$ perturbation of the distribution function of a single-bond variable defined by
\begin{align}
    P(l,K_1,K_2)=\tausum \delta_{l,\tauij}\,P(\tau,K_1, K_2) \quad (l=\pm 1),
\end{align}
where $\delta$ is Kronecker's delta. We define the probability for a bond to be positive (ferromagnetic) $p_+$ by
\begin{align}
    P(l,K_1,K_2)=p_+\,\delta_{l, 1}+(1-p_+)\, \delta_{l, -1}.
    \label{eq:pplusdef}
\end{align}
Then, as shown in Appendix \ref{appendix:pperturb}, $p_+$ is found by first-order perturbation in $\tanh K_2$ as
\begin{align}
    p_+=p\, \Big(1+2c (1-p)(2p-1)^3 \tanh K_2+\mathcal{O}(\tanh K_2)^2\Big),
    \label{eq:l-perturbed}
\end{align}
where $c$ is a positive constant and $p$ is the probability of $\tauij =1$ when $K_2=0$ (i.e., $p_+$ for $K_2=0$), defined through the ratio of probabilities for negative $(1-p)$ and positive $(p)$ values of $\tauij$: $e^{-K_1}/e^{K_1}=(1-p)/p$.  Equation (\ref{eq:l-perturbed}) suggests that the probability of a single bond to be positive (ferromagnetic) increases as we introduce small $K_2$.  This implies (but does not prove) that the low-temperature ferromagnetic phase becomes more stable by the introduction of $K_2>0$. This is intuitively reasonable because $K_2$ suppresses frustration, increasing stability of the ferromagnetic phase.  Since the existence of a paramagnetic phase for $K=K_1\ll 1$ is trivial, we may reasonably expect that there is a phase transition as a function of $K=K_1$ for $K_2>0$ in spite of the absence of singularity in the average energy.

The range of $K_2$ where the analyticity of average energy holds depends on the spatial dimension and the lattice structure. In the case of the two-dimensional square lattice, the $Z_2$ gauge Higgs model of Eq. (\ref{eq:Egauge}) is equivalent to the conventional ferromagnetic Ising model in a finite field on the square lattice according to the duality relation \cite{Wegner1971}.  Since the ferromagnetic Ising model in a field has no singularity, the energy of Eq.~(\ref{eq:Egauge}) is analytic for any positive value of $K_2$
\footnote{
This duality equivalence to the two-dimensional ferromagnetic Ising model in a field also implies the impossibility of the exact closed-form formula of the average energy.}
. In three dimensions, Monte Carlo simulation shows the range of analyticity to be $0\le K_2 < 0.6893$ on the cubic lattice \cite{Tupitsyn2010}.

\subsubsection{Identities for the average energy in three dimensions}
It is generally difficult to derive an explicit compact expression of the average energy for $K=K_1$ except for the case of $K_2=0$ as written in Eq.~(\ref{eq:E_independent}). Nevertheless, identities involving the average energy can be obtained in the case of the three-dimensional cubic lattice using the self duality of the model \cite{Wegner1971}.

Let us define the average energy per spin as
\begin{align}
    e(K_1,K_2)=\frac{1}{N}\, E(K_1, K_1,K_2)=-\frac{1}{N\ztaua}\tausum \Big(\sum_{\pij}\tauij\Big) \,e^{K_1\sum \tauij+K_2\sum \taubox}.
\end{align}
and the average plaquette energy per spin by
\begin{align}
    e_p(K_1,K_2)=-\frac{1}{N\ztaua}\tausum \Big(\sum_{\Box}\taubox\Big) \,e^{K_1\sum \tauij+K_2\sum \taubox}.
\end{align}
For these quantities, we can derive the following relation for a sufficiently large system on the cubic lattice where the boundary effects are sufficiently small, 
\begin{align}
    e(K_1,K_2)+3\tanh K_{1}=-\frac{1}{\sinh 2K_{1}}\, \big( e_p(K_1^*,K_2^*)+3\big),
    \label{eq:Eduality}
\end{align}
where the dual couplings $K_1^*$ and $K_2^*$ are defined as
\begin{align}
    \tanh K_1^*=e^{-2K_2},~\tanh K_2^*=e^{-2K_1}.
\end{align}
The proof is based on the duality relation of Wegner \cite{Wegner1971} for the $Z_2$ gauge Higgs model in three dimensions,
\begin{align}
    \frac{\ztaua}{2^{3N}(\cosh K_1\cosh K_2)^{3N}}=\frac{Z_{\tau}(K_1^*,K_2^*)}{e^{3N(K_1^*+K_2^*)}}.
\end{align}
 Self duality manifests itself as the same function $Z_{\tau}$ appearing on both sides. Taking the logarithmic derivative of both sides with respect to $K_1$ and dividing both sides by $N$, we obtain Eq.~(\ref{eq:Eduality}).

On the self-dual line $K_1=K_1^*$ (equivalently $K_2=K_2^*)$ in the $K_1$-$K_2$ plane, the above relation (\ref{eq:Eduality}) becomes
\begin{align}
    \sinh 2K_{1} \,\cdot e(K_1,K_2)+e_p(K_1,K_2)=-3\tanh K_{1}\, \sinh K_{1} -3.
\end{align}
This is a weighted sum of the averages of $\tauij$ and $\taubox$. A further simplification is realized when we impose another condition $K_1=K_2$ in addition to the self duality, which, together with $K_1=K_1^*$ and $K_2=K_2^*$, yields $\sinh 2K_1=\sinh 2K_2=1$, resulting in
\begin{align}
    e(K_c,K_c)+e_p(K_c,K_c)=-3\sqrt{2},
\end{align}
where $K_c=\frac{1}{2}\ln (\sqrt{2}+1)$ is the solution to $\sinh 2K_c=1$. This is the thermal average of the effective Hamiltonian $K_c\sum \tauij+K_c\sum \taubox$ of the $Z_2$ gauge Higgs model under the probability weight $P_\tau$ at a special point $(K_c,K_c)$ on the $K_1$-$K_2$ phase diagram.

\subsubsection{Specific heat and correlation functions}
\label{subsubsec:specific_heat}
Identities and inequalities can be proven for the specific heat and correlation functions in the subspace $K=K_1$, generalizing the results known in the case of uncorrelated disorder \cite{Nishimori1981,nishimori2001}.

The specific heat in the subspace $K=K_1$, $C(K, K, K_2)$, is bounded from above as follows.
\begin{align}
    &T^2 \, C(K,K,K_2)=-\left.\frac{\partial E(K,K_1,K_2)}{\partial K}\right|_{K_1=K} \nonumber\\
    &=\frac{1}{2^N \ztaua}\tausum\sigmasum e^{K_1\sum\tauij\sigma_i\sigma_j+K_2\sum\taubox}\nonumber\\
    &\left.\hspace{2cm}\times\left\{ \frac{\partial^2_K \ssumsimp e^{K\sum\tauij S_iS_j}}{\ssumsimp e^{K\sum\tauij S_iS_j}} -\left(\frac{\partial_K \ssumsimp e^{K\sum\tauij S_iS_j}}{\ssumsimp e^{K\sum\tauij S_iS_j}}\right)^2\right\}\right|_{K_1=K} \nonumber\\
    &\le \frac{1}{2^N\ztaua}\tausum e^{K_2\sum\taubox}\frac{\partial^2}{\partial K^2}\ssum e^{K\sum\tauij S_i S_j}\nonumber\\
    & \hspace{2cm}-\left(\frac{1}{\ztaua}\frac{\partial}{\partial K}\tausum e^{K\sum\tauij +K_2\sum\taubox}\right)^2 \nonumber\\
    &=-\frac{\partial}{\partial K}E(K, K, K_2) < \infty .
\end{align}
In deriving the third line from the second, we have replaced the average of the squared quantity $(\cdots )^2$ by the square of the average and have applied the condition $K_1=K$. The last inequality is based on the analyticity of the average energy for $K(=K_1)$. This inequality means that the specific heat does not diverge in the subspace $K=K_1$ for small $K_2> 0$, generalizing the result known for uncorrelated disorder \cite{Nishimori1981,nishimori2001}.

Similarly, we can verify the following relations on correlation functions using the method of gauge transformation as above, generalizing known relations \cite{Nishimori1981,nishimori2001}, 
\begin{align}
    &\left[ \langle S_i\rangle_K \right]_{K_1,K_2} =\left[ \langle S_i\rangle_{K_1} \langle S_i \rangle_{K}\right]_{K_1,K_2}
    \label{eq:correlation1}\\
   & \left[ \langle S_i\rangle_K\right]_{K_1,K_2}\le \left[ |\langle \sigma_i\rangle_{K_1}||\langle S_i \rangle_K|\right]_{K_1,K_2} \le \left[ |\langle \sigma_i\rangle_{K_1}|\right]_{K_1,K_2}
   \label{eq:correlation2}\\
  &  \left[ \frac{1}{\langle S_i \rangle_{K_{1}}}\right]_{K_1,K_2}=1.
\end{align}
Here the square brackets $[\cdots]_{K_1,K_2}$ denote the average over $\{\tauij\}$ by the probability $P(\tau,K_1,K_2)$. Notice that Eqs.~(\ref{eq:correlation1}) and (\ref{eq:correlation2}) hold for any values of $K_1, K_2$, and $K$, not just in the subspace $K=K_1$.  The first identity means that there is no spin glass phase, which has $[ \langle S_i\rangle_K ]_{K_1,K_2}=0$ and $[\langle S_i\rangle_K^2]_{K_1,K_2}>0$, when $K=K_1$. It is assumed that boundary spins are fixed to $+1$ such that the thermal average $\langle S_i \rangle_K$ is finite in the ferromagnetic phase.
\subsubsection{More general correlations}
\label{subsubsec:general_correlation}
Analyticity of the average energy holds for a more general class of correlation distributions as long as the invariance property under gauge transformation is satisfied.  A simple example is an additional term of the product of two neighboring plaquette variables,
\begin{align}
    P(\tau,K_1, K_2, K_3)=\frac{1}{Z_{\tau}(K_1, K_2, K_3)}\exp \Big(K_1\sum_{\pij}\tauij +K_2\sum_{\Box}\taubox+K_3\sum_{\Box_1\Box_2}\tau_{\Box_1}\tau_{\Box_2}\Big),
    \label{eq:Ptaugen}
\end{align}
where $\Box_1$ and $\Box_2$ are two neighboring plaquettes sharing a bond. It is straightforward to confirm that the proof of Theorem \ref{theorem} in Appendix \ref{app:proof} remains valid with minor changes after this modification with sufficiently small $K_2$ and $K_3$.

It is unclear whether or not Theorem \ref{theorem} holds in the case of infinitely many (in the thermodynamic limit) terms in the exponent of the probability distribution of bond variables, not just a finite number of terms as in Eq.~(\ref{eq:Ptaugen}). Nevertheless, the following example suggests that similar analyticity may be valid for a large class of probability distributions. Let us consider the probability distribution of correlated disorder,
\begin{align}
    P(\tau,K_1, K_0)=\frac{1}{Z_{\tau}(K_1, K_0)}\,e^{K_1\sum \tauij}\, Z_{\rm I} (K_0, \{\tauij\})=\frac{1}{Z_{\tau}(K_1, K_0)}\,e^{K_1\sum \tauij-F(K_0,\{\tauij\})},
    \label{eq:PIsing}
\end{align}
where $Z_{\rm I}(K_0, \{\tauij\})$ is the partition function of the Ising model on the same lattice with coupling $K_0$ and $F(K_0,\{\tauij\})$ is the corresponding free energy,
\begin{align}
Z_{\rm I}(K_0, \{\tauij\})=\sum_{\{\xi_{i}=\pm 1\}} e^{K_0\sum_{\pij}\tauij \xi_i\xi_j}=e^{-F(K_0,\{\tauij\})}\,.
\end{align}
The denominator $Z_{\tau}(K_1, K_0)$ is for normalization. The free energy can be expressed in terms of a cluster expansion (high-temperature expansion) for small $K_0$, generalizing  Eq.~(\ref{eq:Ptaugen}) to infinitely many (in the thermodynamic limit) gauge invariant terms,
\begin{align}
    -F(K_0,\{\tauij\})=K_1(K_0) \sum \taubox+K_2(K_0)\sum_{\rm connected} \taubox \taubox +K_3(K_0) \sum_{\rm connected} \taubox \taubox \taubox+\cdots,
\end{align}
where the summations run over connected clusters with positive coefficients $K_1, K_2, K_3,\cdots$ \cite{Friedli2017}.

The following result holds for this distribution of correlated disorder.
\begin{theorem}
\label{theorem2}
For sufficiently small $K_0$, the average energy is analytic in $K$  under the distribution function of Eq.~(\ref{eq:PIsing}) of correlated disorder.
\end{theorem}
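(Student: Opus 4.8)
The plan is to reduce the statement to the analyticity already established in Theorem~\ref{theorem}, exploiting the fact that every term generated by the cluster expansion of $-F(K_{\rm I},\{\tauij\})$ is gauge invariant. First I would repeat the gauge-transformation argument of Sec.~\ref{subsubsec:Energy_expression} verbatim. Applying $S_i\to S_i\sigma_i,\ \tauij\to\tauij\sigma_i\sigma_j$ to Eq.~(\ref{eq:Eoriginal}), the thermal average $\langle H\rangle_K$ and the whole weight $e^{-F(K_{\rm I},\{\tauij\})}$ are invariant (the latter because it is a sum of products of plaquette variables $\taubox$), so only the term $K_1\sum\tauij$ is affected. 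Summing over $\{\sigma_i\}$ and imposing $K=K_1$ cancels the denominator exactly as before, yielding
\begin{align}
    -E(K,K,K_{\rm I})=\frac{1}{Z_{\tau}(K,K_{\rm I})}\sum_{\pij}\Big(\tausumsimp \tauij\Big)\,e^{K\sum\tauij-F(K_{\rm I},\{\tauij\})}.
    \label{eq:Egauge_general}
\end{align}
Thus the energy on the subspace $K=K_1$ is once more the expectation of $\sum_{\pij}\tauij$ in a $Z_2$ gauge--Higgs model, with Higgs coupling $K$ and a gauge-invariant plaquette interaction $-F$ replacing the single term $K_2\sum\taubox$.

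Next I would invoke the mechanism behind Theorem~\ref{theorem}. The proof in Appendix~\ref{app:proof} controls the $K$-dependence of such a gauge--Higgs expectation by a convergent strong-coupling (polymer) expansion whose small parameter is the plaquette coupling, and it uses only the gauge invariance of the plaquette term together with the smallness of its coefficient. The high-temperature expansion of the Ising free energy supplies precisely such an interaction: for small $K_{\rm I}$ one has $-F=K_1(K_{\rm I})\sum\taubox+K_2(K_{\rm I})\sum_{\rm connected}\taubox\taubox+\cdots$, with coefficients $K_n(K_{\rm I})$ that are analytic in $K_{\rm I}$ and suppressed by increasingly high powers of $\tanh K_{\rm I}$ as the cluster size $n$ grows. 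Every term is a product of plaquette variables and hence gauge invariant, so the reduction in Eq.~(\ref{eq:Egauge_general}) and the convergence criterion of Appendix~\ref{app:proof} both apply term by term.

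The main obstacle is that, unlike the finite modification of Eq.~(\ref{eq:Ptaugen}), the interaction $-F$ contains infinitely many multi-plaquette clusters in the thermodynamic limit, so the finite-term result cannot simply be quoted. The key estimate is to merge the two expansions into a single absolutely convergent polymer expansion and to show that its total activity stays small. Concretely, I would treat $e^{-F}$ itself as a polymer gas of connected plaquette-clusters and bound the combined sum over these clusters and the strong-coupling polymers of the gauge expansion by a product of two geometric series: one controlled by the powers of $\tanh K_{\rm I}$ carried by $K_n(K_{\rm I})$ against the number of connected $n$-plaquette clusters through a fixed bond (which grows only exponentially in $n$), and the other furnished by the convergence factor of the gauge expansion of Appendix~\ref{app:proof}. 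Because both layers contribute exponential suppression for small $K_{\rm I}$, their product is summable with a bound uniform in the system size, giving a nonzero radius of convergence in $K$ at every fixed small $K_{\rm I}$. Analyticity of $E(K,K,K_{\rm I})$ in $K$ then follows from the uniform convergence of this combined expansion in the thermodynamic limit, exactly as in Theorem~\ref{theorem}.
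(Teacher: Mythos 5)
Your reduction of the average energy at $K=K_1$ to a $Z_2$ gauge--Higgs expectation of $\sum_{\pij}\tauij$ with the gauge-invariant interaction $-F(K_{\rm I},\{\tauij\})$ is correct and coincides with the first step of the paper's argument. From there, however, the two proofs diverge completely. The paper does \emph{not} attempt to extend the cluster expansion of Appendix \ref{app:proof} to the infinitely many cluster terms in $-F$; the text immediately preceding Theorem \ref{theorem2} states explicitly that it is unclear whether Theorem \ref{theorem} survives in that generality. Instead, the paper exploits the special structure of Eq.~(\ref{eq:PIsing}): keeping the Ising spins $\{\xi_i\}$ explicit rather than integrating them out into $-F$, the sum over $\{\tauij\}$ factorizes bond by bond,
\begin{align}
\tausumsimp\prod_{\pij}e^{\tauij(K_{\rm I}\xi_i\xi_j+KS_iS_j)}\propto e^{\tilde{K}\sum_{\pij}\xi_i\xi_jS_iS_j},\qquad \tanh\tilde{K}=\tanh K_{\rm I}\tanh K,
\end{align}
so that after the gauge transformation $\xi_iS_i\to S_i$ the energy becomes that of the \emph{pure} ferromagnetic Ising model at coupling $\tilde{K}$. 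Since $\tilde{K}\le K_{\rm I}$ for all $K\in[0,\infty)$, the effective model stays strictly below its critical point for sufficiently small $K_{\rm I}$, and analyticity in $K$ is immediate. This is an exact computation with no expansion at all.

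Your alternative route is not wrong in spirit, but as written it has a genuine gap precisely where the work lies: the ``key estimate'' that the merged polymer expansion converges is asserted, not proved. To make it rigorous you would need (i) quantitative bounds on the cluster coefficients $K_n(K_{\rm I})$, including the Ursell-function combinatorics of the expansion of $\log Z_{\rm I}$, (ii) a count of connected multi-plaquette clusters through a fixed bond compatible with those bounds, and (iii) a verification that the positivity and monotonicity inputs of the Osterwalder--Seiler argument (the bound $Z_{Q_2}\le Z_{\tau}^+$ and the positive semi-definiteness of the activities $\rhobox$) survive when the single-plaquette term is replaced by arbitrary connected products of plaquettes carrying the coefficients actually produced by $-F$. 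None of these steps is automatic, and their absence is exactly why the paper labels the infinite-term case ``unclear'' and proves Theorem \ref{theorem2} by the exact reduction instead. If you could supply the Koteck\'y--Preiss-type estimate you sketch, you would obtain a strictly stronger statement (analyticity for a broad class of gauge-invariant correlated disorder, not just Eq.~(\ref{eq:PIsing})), but that is a result the paper deliberately refrains from claiming.
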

\begin{proof}
Similarly to Eq.~(\ref{eq:E_intermediate}), the average energy after gauge transformation is
\begin{align}
    &-E(K,K_1,K_0)\nonumber\\
    &=\frac{1}{2^NZ_{\tau}(K_1,K_0)}\tausumsimp\sigmasumsimp e^{K_1\sum \tauij \sigma_i \sigma_j}\sum_{\{\xi_i\}}e^{K_0 \sum \tauij \xi_i \xi_j}\, \frac{\partial_K \ssumsimp e^{K\sum \tauij S_i S_j}}{\ssumsimp e^{K\sum \tauij S_i S_j}}.
\end{align}
When $K=K_1$, the summation over $\{\sigma_i\}$ cancels out with the partition function (summation over $\{S_i\}$) in the denominator,
\begin{align}
    -E(K,K,K_0)
    =\frac{1}{2^NZ_{\tau}(K,K_0)}\tausumsimp \sum_{\{\xi_i\}}e^{K_0 \sum \tauij \xi_i \xi_j}\,\frac{\partial}{\partial K} \ssumsimp e^{K\sum \tauij S_i S_j}\, .
\end{align}
We can carry out the summation over $\{\tauij\}$ at each $\pij$ independently,
\begin{align}
    \sum_{\{\tauij\}}\prod_{\pij}e^{\tauij (K_0 \xi_i\xi_j +KS_i S_i)} \propto e^{\tilde{K}\sum_{\pij}\xi_i\xi_j S_i S_j},
\end{align}
where $\tanh \tilde{K}=\tanh K_0\tanh K$. This relation can be verified by inserting all possible values of $\xi_i \xi_j=\pm 1$ and $S_i S_j =\pm 1$. By the gauge transformation $\xi_iS_i \to S_i$ for fixed $\xi_i$, we find that $E(K,K,K_0)$ is the energy of the ferromagnetic Ising model with uniform coupling $\tilde{K}$ without disorder.  Since the effective coupling $\tilde{K}$ runs from 0 to $K_0$ as $K$ changes from 0 to $\infty$,  the effective coupling $\tilde{K}$ stays below the critical point of the ferromagnetic Ising model for sufficiently small $K_0$. Therefore the energy is analytic in $K$.
\end{proof}

\section{Conclusion}
\label{sec:conclusion}
We have introduced correlations in disorder for the Ising spin glass problem in such a way that the degree of frustration is controlled.  We have analyzed the properties of the system in a subspace of the phase diagram in which the average energy is known to be analytic even across a transition point in the case of uncorrelated disorder. We have proved that the analyticity of the average energy is preserved under the introduction of correlations in disorder. This result remains valid for a few more complex types of correlations as long as correlations are gauge invariant.  Bounds on the specific heat and identities and inequalities have been derived, generalizing the results for uncorrelated disorder.

Our derivation of these results relies on gauge invariance of correlations.  It is difficult to analyze the case without gauge invariance such as a simple product of two neighboring bond variables $\tauij\tau_{jk}$.  It would be safe not to expect similar analyticity to hold in such a case because our result has been derived through a delicate balance of the term in the probability distribution and the partition function, which breaks down for gauge non-invariant probabilities.

For uncorrelated disorder, outstanding properties of the system in the special subspace have been found useful in applications in many fields including statistical inference \cite{Rujan1993,Sourlas1994,Murayama2000,Kabashima2000,Kabashima2000a,Montanari2001,Kabashima2001,Tanaka2002,Franz2002,Kabashima_2003,Macris2006,Macris2007,Decelle2011,Manoel_2013,Caltagirone2014,Xu_2014,Lesieur2015,Huang2016,Zdeborova2016,Huang_2017,Lesieur_2017,Kawamoto2018,Aubin_2019,Antenucci_2019,Kadmon_2019,Murayama2020,Vasiliy2020,Hou2020,Dall_Amico_2021,Kawaguchi2021,Arai2021,Alberici2021}, quantum error correction \cite{Dennis2002,Wang2002,Katzgraber2009,Katzgraber2010,Stace2010,Andrist2011,Bombin2012,Fujii2013,Fujii2014,Andrist2015,Iyer2015,Kubica2018,Kovalev2018,Li2019,Vuillot2019,zarei2019,Viyuela2019,Chubb2021}, quantum Hall effect \cite{Read2000}, and localization \cite{Senthil2000,Vodola2021}. The present result may stimulate further developments in these fields in addition to the spin glass theory itself.

\appendix

\section{Proof of Theorem \ref{theorem}}
\label{app:proof}
We prove by a cluster expansion that the following expression in Eq.~(\ref{eq:Egauge})
\begin{align}
    C_{01}(K_1, K_2)=\frac{\tausumsimp \tau_{01}\, e^{K_1\sum \tauij+K_2\sum (\taubox+1)}}{\tausumsimp  e^{K_1\sum \tauij+K_2\sum (\taubox+1)}}
    \label{eq:C01}
\end{align}
is analytic in $K_1$ for sufficiently small $K_2 (>0)$.

Notice that the exponent above has an additional term $K_2\sum 1$ both in the  numerator and the denominator compared to Eq.~(\ref{eq:Egauge}) but they cancel out to give the same quantity as in Eq.~(\ref{eq:Egauge}). This term is useful in the proof. The proof below is basically an adaptation of the theory in Ref.~\cite{Osterwalder1978} to the present context of the $Z_2$ lattice gauge model. See also Ref.~\cite{Fradkin1979}, where an abridged description of the theory is provided.

If we define $P(\tau,K_1)$ as $P(\tau,K_1,0)$, the above Eq.~(\ref{eq:C01}) is expressed as
\begin{align}
    C_{01}(K_1, K_2)=\frac{\tausumsimp \tau_{01} P(\tau,K_1) \,e^{K_2\sum (\taubox +1)}}{\tausumsimp  P(\tau,K_1)\, e^{K_2\sum (\taubox +1)}}.
    \label{eq:C01a}
\end{align}
To show that the small-$K_2$ expansion of $C_{01}(K_1,K_2)$ converges absolutely and uniformly and thus $C_{01}(K_1, K_2)$ is analytic in $K_1$, we introduce $\rhobox$ as
\begin{align}
    e^{K_2 (\taubox +1)}=1+\rhobox.
\end{align}
This $\rhobox$ is positive semi-definite and small for small $K_2$. The exponential factor in Eq. (\ref{eq:C01a}) is expanded as
\begin{align}
    e^{K_2 \sum_{\Box}(\taubox+1)}=\prod_{\Box}(1+\rhobox)=\sum_Q \prod_{\Box\in Q}\rhobox,
\end{align}
where $Q$ is the set of products of plaquettes.  Let us divide $Q$ into $Q_1$ and $Q_2$, where $Q_1$ is the set of connected products of plaquettes involving the bond $(01)$ and $Q_2$ is $Q\setminus Q_1$. Then
\begin{align}
    C_{01}(K_1,K_2)=\frac{1}{Z_{\tau}^+ (K_1,K_2)}\sum_{Q_1}\sum_{Q_2} \tausumsimp \tau_{01} P(\tau,K_1) \prod_{\Box\in Q_1}\rhobox \prod_{\Box\in Q_2}\rhobox,
\end{align}
where
\begin{align}
    Z_{\tau}^+(K_1, K_2)=\tausumsimp P(\tau,K_1)e^{K_2 \sum (\taubox+1)}=Z_{\tau}(K_1, K_2) e^{K_2 \sum 1}.
\end{align}
Since $Q_1$ and $Q_2$ are disjoint, 
\begin{align}
    C_{01}(K_1, K_2)=\frac{1}{Z_{\tau}^+ (K_1,K_2)}\sum_{Q_1}\sum_{\tau\in Q_1}\tau_{01} P(\tau\in Q_1, K_1) \Big(\prod_{\Box\in Q_1}\rhobox\Big) \cdot Z_{Q_2}(K_1, K_2),
\end{align}
where $Z_{Q_2}(K_1, K_2)$ is the partial partition function,
\begin{align}
    Z_{Q_2}(K_1, K_2)=\sum_{Q_2}\sum_{\tau\in Q_2} P(\tau\in Q_2,K_1) \prod_{\Box\in Q_2}\rhobox.
\end{align}
This partial partition function is bounded from above by $Z_{\tau}^+(K_1, K_2)$ because the coupling $K_2$ for $Q_1$ is set to 0 in $Z_{Q_2}$ in comparison with the full $Z_{\tau}$ and thus the summand is smaller.  See Remark 1 of Lemma 3.2 of Ref.~\cite{Osterwalder1978}.
We therefore replace $Z_{Q_2}(K_1,K_2)/Z_{\tau}^+(K_1,K_2)$ by 1 to have a bound
\begin{align}
    C_{01}(K_1, K_2)&\le \sum_{Q_1}\sum_{\tau\in Q_1}  P(\tau\in Q_1,K_1)\prod_{\Box \in Q_1}\rhobox \nonumber\\
    &\le \sum_k \sum_{|Q_1|=k}\,\sum_{\tau\in Q_1:|Q_1|=k}P(\tau\in Q_1,K_1)\prod_{\Box \in Q_1}\rhobox.
    \label{eq:Cbound1}
\end{align}
We have classified $Q_1$ by the number of elements in it.
Using the trivial inequality
\begin{align}
    \rhobox \le e^{2K_2}-1,
\end{align}
we can replace the average of $\prod_{\Box \in Q_1}\rhobox$ with respect to the weight $P(\tau\in Q_1,K_1)$ by the upper bound,
\begin{align}
   \sum_{\tau\in Q_1:|Q_1|=k} P(\tau\in Q_1,K_1)\prod_{\Box \in Q_1}\rhobox\le (e^{2K_2}-1)^k.
\end{align}
Equation (\ref{eq:Cbound1}) is then simplified as
\begin{align}
    C_{01}(K_1, K_2)\le \sum_k \sum_{|Q_1|=k} (e^{2K_2}-1)^k.
\end{align}
As stated in Lemma \ref{lemma} below, the number of possible connected graphs involving the bond $(01)$ is bounded as
\begin{align}
    N(k)\le c_1\cdot c_2{}^k,\label{eq:Nbound}
\end{align}
where $c_1$ and $c_2$ are positive constants. We therefore obtain
\begin{align}
    C_{01}(K_1, K_2)\le c_1\sum_k c_2{}^k \, (e^{2K_2}-1)^k.
\end{align}
The upper bound of $k$ runs to infinity in the thermodynamic limit, but the summation clearly converges for sufficiently small $K_2$. Uniform convergence follows this absolute convergence of an upper bound, proving analyticity of the left-hand side $C_{01}(K_1,K_2)$ as a function of $K_1$ if we regard $K_1$ as a complex number in the neighborhood of the origin.
\hfill
$\square$
\\
\noindent
{\em Remark.}
The proof remains valid when $\tau_{01}$ is replaced by the product of a finite number of bond variables.

\begin{lemma}
\label{lemma}
 The number of graphs in $Q_1$ is bounded as in Eq.~(\ref{eq:Nbound}).
\end{lemma}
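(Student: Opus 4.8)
The plan is to recast the counting as a purely combinatorial estimate on connected clusters in a graph of bounded degree. Consider the auxiliary \emph{adjacency graph} $G$ whose vertices are the plaquettes of the $d$-dimensional hypercubic lattice, with two plaquettes joined whenever they share a bond $\tauij$; this is precisely the notion of connectivity under which the factorization into $Q_1$ and $Q_2$ in the proof above is valid. The key geometric input is that $G$ has bounded degree: each bond of the hypercubic lattice is contained in exactly $2(d-1)$ plaquettes, and each plaquette has four bonds, so every plaquette is adjacent in $G$ to at most $\Delta := 4(2d-3)$ others. A graph in $Q_1$ with $k$ plaquettes is then exactly a connected $k$-vertex subgraph of $G$, at least one vertex of which is a plaquette containing the fixed bond $(01)$. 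Since only $2(d-1)$ plaquettes contain $(01)$, it suffices to bound, for each such root plaquette, the number of connected $k$-vertex subgraphs of $G$ containing it, and then multiply by $2(d-1)$.

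First I would reduce connected subgraphs to spanning trees: every connected $k$-vertex subgraph of $G$ admits at least one spanning tree, so the number of connected clusters rooted at a fixed plaquette is bounded by the number of \emph{rooted embedded trees} of size $k$ in $G$ with that root. (Assigning a single spanning tree to each cluster can only overcount, which is harmless for an upper bound.) To count these trees I would use a depth-first exploration: traversing a tree with $k$ vertices from its root produces a closed walk of length $2(k-1)$ in which every tree edge is crossed once outward and once back. Such a walk is determined by (i) the pattern of forward versus backward steps, a Dyck-type word and hence one of at most $4^{k-1}$ Catalan possibilities, and (ii) for each of the $k-1$ forward steps a choice of which neighbor to descend to, at most $\Delta$ options each. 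Multiplying, the number of rooted embedded trees of size $k$ is at most $(4\Delta)^{k-1}$.

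Assembling the pieces gives $N(k) \le 2(d-1)\,(4\Delta)^{k-1}$, which is of the claimed form $c_1 c_2^{k}$ with, for instance, $c_1 = 2(d-1)$ and $c_2 = 4\Delta = 16(2d-3)$, both finite constants fixed by the dimension. The individual steps are entirely standard—this is the usual finiteness of the lattice-animal (cluster) growth constant—so the hard part is conceptual rather than computational: one must check that the connectivity notion used in splitting $Q$ into $Q_1$ and $Q_2$ really coincides with bond-adjacency in $G$, so that the bounded-degree estimate applies to exactly the objects being summed. I expect the only genuine subtlety to be the correct bookkeeping of the local coordination numbers ($2(d-1)$ plaquettes per bond, $\Delta$ neighbors per plaquette) in general dimension; the exponential bound itself is robust to the precise values of these constants and could alternatively be quoted directly from the cluster-expansion literature (e.g.\ Ref.~\cite{Osterwalder1978}).
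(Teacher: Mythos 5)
Your proof is correct. For this lemma the paper does not actually supply an argument: it simply points to Lemma 3.3 of Osterwalder--Seiler and to Exercise 5.3 of Friedli--Velenik, remarking that the standard bound on the number of connected edge-clusters carries over verbatim when bonds are replaced by plaquettes. What you have written is precisely the content of that standard result, instantiated for the plaquette-adjacency graph: bounded degree ($2(d-1)$ plaquettes per bond, hence at most $4(2d-3)$ neighbors per plaquette), reduction of connected clusters to rooted spanning trees, and the Dyck-word/depth-first count giving at most $(4\Delta)^{k-1}$ rooted trees of size $k$. Your bookkeeping of the coordination numbers is right, and you correctly identify the one point that needs checking, namely that the notion of connectivity used to split $Q$ into $Q_1$ and $Q_2$ (plaquettes sharing a bond variable) coincides with adjacency in your auxiliary graph $G$; it does, since the factorization of the $\tau$-sums requires exactly that $Q_1$ and $Q_2$ involve disjoint sets of bonds. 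So the only difference from the paper is that you prove the lemma rather than cite it, and you obtain explicit constants $c_1=2(d-1)$, $c_2=16(2d-3)$, which the paper leaves unspecified.
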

This is Lemma 3.3 of Ref.~\cite{Osterwalder1978} and is a plaquette version of the well-established bound for the number of connected graphs with variables on the bonds (edges), not on plaquettes, as described, for example, in Exercise 5.3 of Ref.~\cite{Friedli2017}. It is straightforward to replace bonds with plaquettes to derive the above Lemma.

\section{Evaluating $p_+$ to first order in $\tanh K_2$}
\label{appendix:pperturb}
According to the definition of Eq.~(\ref{eq:pplusdef}), $p_+$ is written as
\begin{align}
    p_+=\frac{\tausumsimp \delta_{\tau_{ij},1}e^{K_1\sum \tauij}\prod_{\Box}(1+\taubox\tanh K_2)}{\tausumsimp e^{K_1\sum \tauij}\prod_{\Box}(1+\taubox\tanh K_2)}.
    \label{eq:pplus}
\end{align}
The denominator is expanded to first order in $\tanh K_2$ as
\begin{align}
    (\cosh K_1)^{N_B}+N_p \tanh K_2 \,(\cosh K_1)^{N_B-4}(\sinh K_1)^4,
    \label{eq:App2den}
\end{align}
where $N_B$ is the number of bonds and $N_p$ is for the number of plaquettes.  Similarly, the numerator is
\begin{align}
    &(\cosh K_1)^{N_B-1}\sum_{\tauij}\delta_{\tauij,1}e^{K_1\tauij}\nonumber\\
    &+(\cosh K_1)^{N_B-5}\tanh K_2 \,(\sinh K_1)^4\sum_{\tauij} \delta_{\tauij,1}e^{K_1\tauij} (N_B-a)\nonumber\\
    &+(\cosh K_1)^{N_B-4}\tanh K_2\, (\sinh K_1)^3 \sum_{\tauij} \delta_{\tauij,1}\tauij \, e^{K_1\tauij}\cdot a,
\end{align}
where $a$ the number of plaquettes to which a given single bond belongs.
Inserting these expansions to Eq.~(\ref{eq:pplus}) and leaving the leading order, we have
\begin{align}
    p_+=p\Big(1-a(2p-1)^4 \tanh K_2+a (2p-1)^3\tanh K_2 \Big),
\end{align}
where we used $(1-p)/p=e^{-2K_1}$.
This is Eq.~(\ref{eq:l-perturbed}).

%

\end{document}